\newcommand{\red}[1]{{\textcolor{red}{#1}}}
\newcommand{\todo}[1]{\noindent\textbf{TODO: }\marginpar{****}%
\textit{{\red{#1}}}\textbf{ :ODOT}}
\newcommand{\G}{\mathcal{G}}
\newcommand{\A}{\mathcal{A}}
\newcommand{\lead}{\textsc{leader}}
\newcommand{\leader}{\textsc{leader}}
\newcommand{\beep}{\textsc{beep}}
\renewcommand{\ge}{\geqslant}
\renewcommand{\le}{\leqslant}
\renewcommand{\geq}{\geqslant}
\renewcommand{\leq}{\leqslant}
\newtheorem{theorem}{Theorem}
\newtheorem{lemma}{Lemma}
\newtheorem{obs}{Observation}
\newcommand{\shortOnly}[1]{\ifthenelse{\boolean{short}}{#1}{}}
\newcommand{\onlyShort}[1]{\ifthenelse{\boolean{short}}{#1}{}}
\newcommand{\longOnly}[1]{\ifthenelse{\boolean{short}}{}{#1}}
\newcommand{\onlyLong}[1]{\ifthenelse{\boolean{short}}{}{#1}}
\begin{document}
%\onlyLong{
%\begin{titlepage}
%}
\title{Robust Leader Election in a Fast-Changing World}
\author{John Augustine\thanks{Department of Computer Science and Engineering, Indian Institute of Technology Madras, Chennai, India.  \hbox{E-mail}:~{\tt augustine@cse.iitm.ac.in, tejasvijaykulkarni@gmail.com, paresh.nakhe@gmail.com}.}  
\and Tejas Kulkarni$^\ast$
\and Paresh Nakhe$^\ast$
 \and Peter Robinson\thanks{Division of Mathematical
Sciences, Nanyang Technological University, Singapore 637371.
\hbox{E-mail}:~{\tt peter.robinson@ntu.edu.sg}}}

\newcommand{\authorrunning}{J.\ Augustine, T.\ Kulkarni, P.\ Nakhe, P.\ Robinson}
\newcommand{\titlerunning}{Robust Leader Election in a Fast-Changing World}

\maketitle

\begin{abstract}
We consider the problem of electing a leader among nodes in a highly dynamic
network where the adversary has unbounded capacity to insert and remove nodes (including the leader) from the network and change connectivity at will. 
We present a randomized algorithm that (re)elects a leader
in $O(D\log n)$ rounds with high probability, where $D$ is a bound on the
dynamic diameter of the network and $n$ is the maximum number of nodes in the network at any point in time.
We assume a model of broadcast-based communication where a node can
send only $1$ message of $O(\log n)$ bits per round and is not aware of the
receivers in advance.
Thus, our results also apply to mobile wireless ad-hoc networks,
improving over the optimal (for deterministic algorithms) $O(Dn)$ solution
presented at FOMC 2011. We show that our algorithm is optimal by  proving that {\em any} randomized Las Vegas algorithm takes at least $\Omega(D\log n)$
rounds to elect a leader with high probability,
which shows that our algorithm yields the best possible (up to constants)
termination time.
\end{abstract}

%\onlyLong{
%\end{titlepage}
%}
%\onlyShort{
%}

% !TEX root = main.tex

\section{Introduction}\label{sec:intro}

Electing a leader among distributed nodes in a dynamic environment is a 
fundamental but challenging task.
Protocols that were developed for static communication networks (cf.\ 
\cite{Lyn96} and references therein) are not applicable in settings 
where the network topology is continuously evolving due to mobility and 
failures of nodes.
Real world dynamic networks are continuously evolving, which requires
algorithms to function even when the network itself is constantly changing.

In this work, we consider the Dynamic Leader Election problem (cf.\
Section~\ref{sec:DLE}), which was introduced as ``Regional Consecutive Leader
Election'' in \cite{CRW:2010}.
Intuitively speaking, we consider a network of
(not necessarily fixed) distributed nodes that are required to elect a unique leader
among themselves within some bounded time.
Since any node (including the leader) can leave the network --- due to 
being out of reach from any other node or by simply crashing --- we 
require nodes to detect the absence of the leader and reelect a new leader 
as soon as possible.

Dynamic Leader Election is clearly impossible in absence of any guarantees 
on information propagation between nodes.
Similarly to \cite{CRW:2010,CRW:2011}, we assume a communication diameter 
$D$ which bounds the time needed for information propagation between nodes 
in the network.
Intuitively speaking, if some node $u$ sends some information $I$ in round 
$r$ and some node $v$ remains in the network during rounds $[r,r+D]$, then 
$v$ is guaranteed to receive the information of $u$.
Note that, in contrast to the notion dynamic diameter in undirected 
dynamic networks (cf.\ \cite{KMO:11}), our communication diameter 
assumption does not give symmetric guarantees for communication between 
nodes.
In fact, by the time that $v$ receives $I$, node $u$ might have long left 
the network and thus is unable to receive any information sent by $v$.

In this work, we present a randomized algorithm that provides fast termination
bounds even though the dynamic network is designed by an \emph{oblivious adversary} who knows our algorithm, but is unaware of the outcome of the
private coin flips. In essence, the oblivious adversary must commit in
advance to all changes made to the dynamic network before the algorithm starts executing.
Assuming
an {oblivious adversary} is a suitable choice for worst case analysis when the dynamism of network is not under control of a malicious force that is capable of inferring the current state of the nodes in the network but rather 
caused by mobility and hardware limitations of the participating nodes.

Our system model (cf.\ Section~\ref{sec:model}) requires nodes to send messages
by broadcasting them to their (current) neighbors in the network.
Moreover, the sender of a message does \emph{not} know its current neighbors.
Considering that the adversary can dynamically change the topology of the
network and the participating nodes themselves, our assumptions are suitable for
modeling real world dynamic networks, including mobile wireless ad-hoc networks
(MANETS).
In general, care must be taken to account for message loss due to collisions of
wireless broadcasts.
Similarly to \cite{CRW:2011}, we focus on the algorithmic aspects of the 
dynamic network model where neighboring nodes can communicate reliably in 
synchronous rounds, which can be guaranteed by using the Abstract MAC 
Layer (cf.\ \cite{KLN09}).

\subsection{Our Main Results:}
We present asymptotically optimal results for solving Dynamic Leader 
Election when nodes have access to unbiased private coin flips.   Our main contributions are the following:
\begin{compactenum}
\item We present a randomized algorithm that guarantees a termination 
bound of $O(D\log n)$ where $n$ is the maximum number of nodes in the 
network at any time (cf.\ Theorem~\ref{thm:main}). This improves over the 
deterministic $O(D n)$ bound of \cite{CRW:2011}. This result holds even when the dynamic network is designed by an oblivious adversary that knows the algorithm and completely controls the network dynamics, but must commit 
to its choices in advance without any knowledge of the private random bits used to execute the algorithm.
\item We show that any randomized algorithm takes at least $\Omega(D\log 
n)$ rounds to terminate with high probability\footnote{Throughout this 
  paper, ``with high probability'' means with probability at least $1-O(\frac{1}{n})$.} 
(cf.\ Theorem~\ref{thm:lowerbound}). Interestingly, this lower bound holds even if the oblivious adversary does \emph{not} have any
knowledge of the actual algorithm in place.

\end{compactenum}

\subsection{Other Related Work} \label{sec:related}
Existing solutions to the Dynamic Leader Election problem
\cite{CRW:2010,CRW:2011} consider only deterministic algorithms, albeit for the
more powerful omniscient adversary that knows the entire execution of the
algorithm in advance.

In more detail, \cite{CRW:2011} proves a lower bound that shows that any 
algorithm takes at least $\Omega(D n)$ rounds for termination in the worst 
case, in the presence of an omniscient adversary.
\cite{CRW:2011} also presents a matching (deterministic) algorithm that 
guarantees termination in $O(D n)$ time and improves over the algorithm of 
\cite{CRW:2010} by only sending one broadcast per round.

In the context of mobile ad-hoc networks, \cite{MWV00} presents an 
algorithm for electing a leader in each connected component of a network with a
changing topology and proves its correctness for the case when there is a single 
topology change. In \cite{ISWW09}, a leader election protocol for 
asynchronous networks with dynamically changing topologies is described 
that elects a leader as soon as the topology becomes stable.
Several other leader election algorithms for mobile environments are considered
in \cite{BA06,MWV00,ISWW09,MAB06,PKY04,VKT04,DLP:10}.

\cite{HPSTT99} presents a leader election algorithm in a model where the 
entire space accessible by mobile nodes is divided into non-intersecting 
subspaces.
Upon nodes meeting in a common subspace, they decide on which node 
continues to participate in the leader election protocol.
Moreover, \cite{HPSTT99} presents a probabilistic analysis for the case 
where the movement of nodes are modeled as random walks.

The dynamic network model of \cite{APRU12} is similar to the model used in 
this paper, in the sense that nodes can leave and join the (peer-to-peer) 
network over time and the topology of the network can undergo some 
changes.
\cite{APRU12} show how to solve almost everywhere agreement despite high 
amount of node churn, which in turn can be used to enable nodes to agree on a
leader.
However, \cite{APRU12} assumes that the underlying network topology 
remains an expander which does not necessarily apply to mobile ad-hoc networks.

There are established lower bounds on time and message complexity for 
leader election algorithms (cf.\ 
\cite{KorachPODC1984,AG1991:SICOMP,KPPRT2013:PODC}) in {static} synchronous networks.
These lower bounds do not apply to our system model and the Dynamic Leader 
Election, due to the fundamental impact of our additional assumptions, 
i.e.,  nodes leaving and joining the network and changes to the 
communication topology.

Similar to the technique that we use in our  leader election algorithm, the work of \cite{atish} uses exponential random variables in the context of finding dense subgraphs on dynamic networks.

The paper is organized as follows. In Section~\ref{sec:model}, we describe the dynamic network model that we study and formulate the dynamic leader election problem. We show that any randomized algorithm takes at least $\Omega(D\log 
n)$ rounds to terminate with high probability in Section~\ref{sec:lower}. To complement this negative result, we provide an optimal algorithm that terminates in $O(D \log n)$ rounds with high probability in Section~\ref{sec:upper}.

% !TEX root = main.tex

\section{Preliminaries}\label{sec:model}

\subsection{\textbf{System Model}} We consider a
dynamic network of nodes modelled as a dynamically changing infinite sequence of
graphs $\G = (G^{1}, G^2, \ldots)$. Each node runs an instance of a distributed
algorithm and computation is structured into synchronous rounds. 
We assume that all nodes have access to a common global clock.
Nodes communicate with their neighbors via broadcast communication where
a node  is restricted to broadcast at most $1$ message of $O(\log n)$
bits per round.
%\todo{Peter: is the $O(\log n)$ assumption above fine for the current algorithm?}
The graph $G^r = (V^r, E^r)$, $r \ge 1$,
represents the state of the network in round $r$, i.e., the vertex set
$V^r$ is the set of nodes in the network and  the edges $E^r$ represent
the  connectivity in the network. The
maximum number of nodes in the network at any time is denoted by $n$, i.e., $\forall $r$, |V^r| \le n$. 
For two rounds $r_a$ and $r_b$ where $r_a \leq r_b$ we define 
$V^{[r_a,r_b]} = \{ v \mid \forall k \in [ r_a , r_b], v \in V^k \}$.    
We assume that each node enters the network once and may therefore leave the network at most once. A node that enters the network in round $r$ and leaves the network at the end of round $r'$  ($ = \infty$ if the node never leaves) will be in $V^{[r,r']}$.
If $e = (u,v) \in E^r$, then any message
sent in round $r$ by $u$ is received by $v$ in round $r$ and vice versa. 
%\todo{Peter: I've changed this to use the broadcast assumption to make it
%  applicable to wireless networks. Please check.}
Since communication is broadcast based, node $v$ does not know that $u$ is its in round $r$ until it has received a message from $u$.
In particular, when node $u$ broadcasts a message in round $r$, it does
not know which nodes will receive this message.

The changes in node and edge sets are made by an adversary that is
\emph{oblivious} to the state of the nodes. In other words,  the adversary
commits to the sequence $\G$ before round 1 after which it cannot make
anymore changes. (Nodes themselves are not aware of this sequence in
advance.) % which means that a node in $V^{r}$ gets to  know it's neighbors in $G^r$ only in round $r$. 
Thus, we can view each round as a sequence of three events:
\begin{compactitem}
\item The network is updated to $G^r$ (fixed in advance by the adversary).
  We say that a node \emph{survives until $r$} if it remains in $V^r$.
\item Nodes perform local computation including (private) random coin flips.
\item Nodes communicate with their neighbors.
\end{compactitem}
An \emph{execution} of an algorithm is entirely determined by a sequence
$\G$ and the outcome of the local coin flips.

So far, we have not imposed any restrictions on how the adversary can
modify the node and edge sets. In fact, we do not even require that any particular graph in $\G$ be connected.
However, we now introduce a weak restriction on the propagation of information
among nodes.
We say that a node $u$ \emph{floods} a message $M$ starting in round $r$ if it broadcasts $M$ to its neighbors in every round starting from $r$ as long as it is alive {\em and} every other node that receives $M$ continues in turn to flood $M$.
 We assume that  if a node
$u \in V^r$ floods a message starting from  round $r$, then every node that is in
$V^{[r,r+D]}$ will receive the message  by round $r+D$.  We call $D$ 
the \emph{bounded communication diameter} of the network and assume that $D$ is common knowledge of all nodes.

As a further convenience, the model allows conditional flooding in which a message $M$ may be flooded until some condition is met. 
More precisely, a conditionally flooded message comprises of both a message and a condition. When a node receives a conditionally flooded message, it continues to flood the message if and only if the condition is true.
Such conditional flooding will be useful in ensuring that beep messages sent by a leader to indicate its presence in the network don't persist forever in the network. These beep messages can be flooded under the condition {\tt <current beep message is latest and generated in last D rounds>}, thereby implicitly discarding stale beep messages.

\subsection{\textbf{Dynamic Leader Election}}  \label{sec:DLE}
Given a synchronous dynamic
network $\G$, we want to develop an algorithm to elect a leader in bounded
time in spite of the churn. Each node $u$ is equipped with a variable $\lead_u$
which is initialized to $\bot$. A node $u$ {\em chooses} or {\em elects}  a leader node $v$ by
setting $\lead_u \leftarrow v $; note that this allows a node $u$ to elect
itself as the leader by setting $\lead_u \leftarrow u$. 
%Let $T$ denote the maximum time taken by a node to set it's leader variable. The goal is to design an algorithm such that the value of $T$ is minimized. 
The leader election algorithm so developed should satisfy the following conditions:
\begin{description}
\item[{Agreement:}] If $u$ and $v$ are any two nodes in $G^{r}$ such that, $\leader(u) \neq \bot$ and $\leader(v) \neq \bot $ then $\leader(u) = \leader(v)$. This implies that at any one round, there is at most one leader in the network.
\item[Termination:] Any node $u$ without a leader should elect a leader
  within a bounded number of rounds; formally, this corresponds to assigning $\lead_u$. We say that an algorithm has {\em termination time}
  $T$ if every node $u$ that has $\lead_u=\bot$ in some round $r$, has set
  $\lead_u\neq\bot$ by some round $r'\le r+ T$, assuming that $u \in V^{[r,r+T]}$.
\item[Validity:] If some node $u$ elects distinct node $v$ as its leader in round $r$, then $v$ must have been the leader in some round in $[r - D-1, r]$.
\item[Stability:]  If a node $u$ stops considering a node $v$ as its leader, then $v$ has left the network.
\end{description} 

We are interested in tight bounds on the termination time $T$. From~\cite{CRW:2011}, we know that $T \in \Omega(nD)$ when we require deterministic bounds on $T$. Our interest, however, is to attain termination times that are significantly lower. Towards this goal, we focus on randomized algorithm that provide significantly better bounds on $T$ that hold with high probability of the form $1 - 1/n^{\Omega(1)}$.

% !TEX root = main.tex

\section{Lower Bound on Termination Time}\label{sec:lower}
In this section, we wish to show a lower bound on the termination time $T$ 
of any randomized algorithm. More precisely, we wish to show that, for any 
algorithm $\A$ that solves dynamic leader election (with probability $1$), the number of rounds until $\A$ terminates with high 
probability is at least $\Omega(D \log n)$, where $n$ is the most number of nodes 
in the network at any given time. Recall that we have assumed that our 
adversary is oblivious in that it is aware of the algorithm used, but is 
oblivious to the outcomes of coin flips used by the algorithm. In effect, 
we can assume that the adversary generates $\G$ before the algorithm 
starts its execution. In particular, for the purpose of proving the lower 
bound, the adversary follows the strategy described in Algorithm~\ref{alg:adv}.  

\begin{algorithm}
\caption{Strategy of the Oblivious Adversary.}
\label{alg:adv}               
\begin{algorithmic}[1]
\STATE \COMMENT{This algorithm produces a sequence of graphs that make up 
  an input instance to the Dynamic Leader Election problem.  
The adversary equips each node with a unique id chosen uniformly at random from the set $\{1,\dots,n^5\}$; once an id has been used for a node, it is removed from this set.}
\vspace{0.05in}
\STATE We begin with $n$ nodes in the network. 
\FOR {$i = 0,1,2...$}
\vspace{0.05in}
\STATE \COMMENT{Generating $G^{iD+1}$ to $G^{(i+1)D -1}$:}
\STATE From round $iD+1$ to $(i+1)D-1$, all nodes are pairwise disconnected. 
\vspace{0.05in}
\STATE \COMMENT{Generating $G^{(i+1)D}$:}
\STATE In round $(i+1)D$, each node is removed with probability $1/2$. Surviving nodes become a part of $G^{(i+1)D}$.
\STATE New nodes are added to bring the cardinality of nodes in $G^{(i+1)D}$ back to $n$.
%The adversary chooses the ids of the new nodes such that a new node has probability $0$ of electing itself as the leader in round $(i+1)D$.
\STATE The $n$ nodes form a completely connected network in round  $G^{(i+1)D}$.
\ENDFOR
\end{algorithmic}
\end{algorithm}

The following observation confirms that the adversarial strategy as described in
Algorithm~\ref{alg:adv} is valid according to the modeling assumptions state in
Section~\ref{sec:model}.

%\todo{Are these observations specific to the adversarial strategy?? }
%\todo{ Peter: I believe so, we're references Algorithm 1, right? }
\begin{obs}
The number of nodes in each $G \in \G$ generated by Algorithm\ 
\ref{alg:adv}  is $n$ and remains so throughout its lifetime. Furthermore, 
$\G$ generated by Algorithm\ \ref{alg:adv}  has a communication diameter 
bounded by $D$ and forms a valid input for the dynamic leader election 
problem.
\end{obs}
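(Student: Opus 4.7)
My plan is to verify the three claims of the observation in turn: the node-count invariant, the communication-diameter bound of $D$, and validity as an input for dynamic leader election.

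For the node-count invariant $|V^r| = n$ for every $r \ge 1$, I would argue by induction on $r$. Line 2 establishes $|V^1| = n$. For a non-synchronization round with $iD + 1 \le r \le (i+1)D - 1$, lines 4--5 modify only the edge set, so $V^r = V^{r-1}$. For a synchronization round $r = (i+1)D$, line 7 removes each node independently with probability $1/2$ and then line 8 adds fresh nodes---whose ids are drawn without replacement from $\{1,\ldots,n^5\}$---so that the cardinality is restored to exactly $n$.

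For the diameter bound, the structural fact I would rely on is that every window $[r, r+D]$ of $D+1$ consecutive rounds contains at least one synchronization round $r^{*} = kD$, since consecutive multiples of $D$ are spaced exactly $D$ apart. At $r^{*}$, $G^{r^{*}}$ is the complete graph on $V^{r^{*}}$, so a single broadcast from any node in $V^{r^{*}}$ reaches every other node of $V^{r^{*}}$ in one round. Given $u \in V^r$ that floods a message $M$ starting at round $r$, the node set is stable from $r$ until just before $r^{*}$; if $u$ survives the removal at $r^{*}$ then its broadcast at round $r^{*}$ reaches every $v \in V^{r^{*}}$, and since any $v \in V^{[r, r+D]}$ is alive at $r^{*} \in [r, r+D]$ we have $v \in V^{r^{*}}$, so $v$ receives $M$ by round $r^{*} \le r+D$. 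I expect the main obstacle to be the edge case where $u$ itself is removed exactly at $r^{*}$, since $u$'s broadcasts into an otherwise-disconnected network reach no one before $u$ disappears; I would handle this either by noting that $u \notin V^{[r, r+D]}$ in this case---so the propagation guarantee of interest only concerns surviving nodes---or by appealing to an event-ordering convention that lets $u$'s broadcast at $r^{*}$ happen while the complete graph is still in place.

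For validity, I would verify the remaining modelling requirements: identifiers are pairwise distinct over the execution (by the without-replacement sampling in line 2), each node joins the network at most once (line 8 uses only freshly generated nodes) and leaves at most once (removals are permanent, since removed nodes never reappear), and the adversary is oblivious and commits to $\mathcal{G}$ before round $1$. Combining the node-count invariant with the diameter bound then ensures that $\mathcal{G}$ meets every assumption of Section~\ref{sec:model}, making it a valid input instance for dynamic leader election.
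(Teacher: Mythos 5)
The paper states this observation without any proof at all --- it is offered as self-evident from the construction in Algorithm~\ref{alg:adv} --- so your write-up is strictly more detailed than anything in the paper. Your node-count induction and your identification of the synchronization round $r^{*}=kD$ inside every window $[r,r+D]$ (with $r\le r^{*}\le r+D-1$, where $G^{r^{*}}$ is complete and contains every node of $V^{[r,r+D]}$) are exactly the right decomposition, and the validity checks at the end are fine.

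The one substantive point is the edge case you flag, and there your two proposed resolutions are not equally good. The second one (an event-ordering convention letting $u$ broadcast at $r^{*}$ before the removal takes effect) contradicts the model as stated: the paper explicitly fixes the within-round order as network update first, then computation, then communication, so a node removed at $r^{*}$ never broadcasts in round $r^{*}$. The first resolution --- reading the flooding guarantee as applying only when the source itself survives the window, i.e.\ restricting to $u\in V^{[r,r+D]}$ rather than merely $u\in V^{r}$ --- is the correct one, but note that it is a (mild) reinterpretation of the model's literal statement, which quantifies the conclusion over all surviving $v$ while only requiring $u\in V^{r}$ of the source. This reinterpretation is in fact forced by the paper itself: the second case of Lemma~\ref{lem:prob} relies on a self-elected leader being removed at round $iD$ \emph{before it gets a chance to inform other nodes}, a scenario that would violate the flooding guarantee under the literal reading. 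So you have correctly isolated the only delicate step; just commit to the first resolution and drop the second.
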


We say that a node $\ell$ is an {\em effective leader} if it has executed 
$\leader(\ell) \leftarrow \ell$ and there exists at least one other node 
$u$ that  is aware that $\ell$ has elected itself leader.
If no such node $u$ exists, we say that $\ell$ is \emph{ineffective}.

The validity and agreement conditions imply the following Observation~\ref{obs:effective}: 

\begin{obs} \label{obs:effective}
Suppose that some node $\ell$ is an ineffective leader during $[r_1,r_2]$.
Then no distinct node can set its $\lead$ variable to a value $\neq \bot$ 
during $[r_1,r_2]$.
\end{obs}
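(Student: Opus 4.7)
The plan is to prove Observation~\ref{obs:effective} by contradiction. Suppose some distinct node $u\neq \ell$ assigns $\lead_u \leftarrow w$ with $w\neq \bot$ in a round $r\in[r_1,r_2]$. Since $\ell$ is said to be an ineffective leader throughout $[r_1,r_2]$, $\ell$ must be present in the network during the whole interval, so $\ell\in V^r$; likewise $u\in V^r$ because $u$ executes local computation in round $r$. The argument then splits on the value of $w$.

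In the first case, $w=\ell$. Immediately after $u$'s assignment we have $\lead_u=\ell$, which by itself means that $u$ has learned that $\ell$ executed $\lead_\ell \leftarrow \ell$ --- this is exactly what ``aware that $\ell$ has elected itself leader'' intends, and so $\ell$ becomes effective in round $r$, contradicting the hypothesis that $\ell$ is ineffective throughout $[r_1,r_2]$. In the second case, $w\neq\ell$. Then, after the computation step of round $r$, both $\ell$ and $u$ lie in $V^r$ and both hold non-$\bot$ leader variables with $\lead_\ell=\ell\neq w=\lead_u$, contradicting the Agreement condition, and hence contradicting the premise that the underlying algorithm solves Dynamic Leader Election.

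The one step that needs a little care is the ``awareness'' interpretation in the first case: one must argue that a correct algorithm cannot set $\lead_u \leftarrow \ell$ on $u$'s side without some causal chain of messages originating at $\ell$'s self-election. This is forced by Validity, which requires that $\ell$ be a leader in some round of $[r-D-1,r]$; were $u$ permitted to guess $\ell$'s identity without such information, one could construct a run in which $\ell$ never self-elects (or has left before doing so), and Validity would fail. Thus the very act of assigning $\lead_u\leftarrow \ell$ witnesses that $u$ is aware of $\ell$'s self-election, which is precisely what the definition of effectiveness demands, and the two cases together yield the claim.
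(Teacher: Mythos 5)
Your proof is correct and follows exactly the route the paper intends: the paper offers no explicit proof, merely asserting that the observation is implied by the validity and agreement conditions, and your case split ($w=\ell$ handled via Validity and an indistinguishability argument about awareness, $w\neq\ell$ handled via Agreement) is precisely the fleshed-out version of that one-line justification. The only informal point — what ``aware'' means — is informal in the paper's own definition of effective leader, and your causal-chain reading is the intended one.
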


\begin{lemma}\label{lem:induct}
Suppose that there is no leader at the start\footnote{The \emph{start of 
    round $r+1$} and the \emph{end of round $r$} happen at the exact same 
  point in time.}  of round $iD+1$, for any $i \ge 0$. Then there will be 
no effective leader until the end of round $(i+1)D -1$.
\end{lemma}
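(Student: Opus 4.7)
The structural fact I will exploit is that the adversary's schedule makes $G^r$ edgeless for every $r \in [iD+1, (i+1)D - 1]$ and keeps the node set frozen throughout this window (insertions and removals occur only in the complete-graph rounds of index $(i+1)D$). Consequently, during the window no message is delivered and every node's state at any round is determined solely by its state at the start of round $iD+1$ together with its private coin flips during the window. Combining this informational isolation with Validity, Stability, and Agreement will yield a contradiction from the assumption that some effective leader emerges in the window.

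Suppose, toward contradiction, that $\ell$ is an effective leader in some round $r^{\star} \in [iD+1, (i+1)D-1]$, witnessed by a distinct node $u$ with $\lead_u = \ell$. Since the node set is frozen during the window, $u,\ell \in V^{iD+1}$, and by hypothesis $\lead_u \ne \ell$ at the start of round $iD+1$, so $u$ must perform $\lead_u \leftarrow \ell$ in some round $r_u \in [iD+1, r^{\star}]$. Applying Validity to this assignment forces $\ell$ to have been a self-leader in some round of $[r_u - D - 1, r_u]$. Stability applied to the case $v = \ell$ (observing that $\ell$ does not leave the window) shows that if $\ell$ had ever set $\lead_\ell = \ell$ at any round $\le iD$, that assignment would still be in force at the start of round $iD+1$, contradicting the hypothesis; hence $\ell$ first self-elects at some round $r_\ell \in [iD+1, r_u]$.

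I now derive the contradiction via an indistinguishability argument. The trace of $u$ throughout $[iD+1, r_u]$ is a function only of $u$'s state at the start of round $iD+1$ together with $u$'s private coin flips during the window, and is independent of $\ell$'s coin flips during the same window because no message from $\ell$ can reach $u$ across an edgeless graph. Consider then an alternate execution of $\A$ in which only $\ell$'s coin flips during $[iD+1, r_u]$ are resampled so that $\ell$ never self-elects in this interval: the trace of $u$ is unchanged, so $u$ still performs $\lead_u \leftarrow \ell$ in round $r_u$, while $\ell$ was never a self-leader in $[r_u - D - 1, r_u]$, violating Validity and contradicting the correctness of $\A$. The step I expect to be the main obstacle is justifying the existence of such an alternate coin-flip sequence for $\ell$: the argument here is that if $\ell$'s algorithm were forced by its start-of-window state to self-elect along every realization of its window coin flips, then the same reasoning symmetrically applied to any second candidate $\ell' \in V^{iD+1}$ would force two distinct nodes to self-elect simultaneously in some joint realization of coin flips, contradicting Agreement. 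This rules out deterministic self-election and hence supplies the desired alternate coin flips, completing the argument.
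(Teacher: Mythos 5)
Your proposal overshoots the mark and, in doing so, introduces a genuine gap. The paper's own proof is a direct two-step argument: if no node self-elects during $[iD+1,(i+1)D-1]$ we are done; otherwise, since every graph in that window is edgeless, no message leaves $\ell$ before round $(i+1)D$, so no other node can become \emph{aware} that $\ell$ has elected itself, and hence $\ell$ is ineffective by the very definition of an effective leader. Your contradiction route instead tries to show that no node $u$ can even execute $\lead_u \leftarrow \ell$, and the load-bearing step --- the existence of an alternate realization of $\ell$'s window coin flips under which $\ell$ never self-elects --- is not justified. Your symmetry argument (``the same reasoning applied to any second candidate $\ell'$ would force two simultaneous self-elections'') fails because nodes carry distinct ids and distinct pre-window histories, so nothing forces a second node into the same deterministic behaviour: an algorithm in which exactly one node, determined by its start-of-window state, self-elects with probability $1$ is perfectly consistent with Agreement. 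In that case no resampling is available, your contradiction evaporates, and yet the lemma still has to be proved --- which the paper's direct argument does, but yours does not.

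A secondary mismatch: you instantiate the witness of effectiveness as a node $u$ with $\lead_u = \ell$, whereas the definition only requires a node that is \emph{aware} that $\ell$ has elected itself leader. Ruling out assignments of the leader variable does not by itself rule out awareness, so even if your indistinguishability step were repaired, you would have proved a statement slightly different from the lemma's conclusion. The fix for both problems is the same: argue directly that awareness requires information to flow out of $\ell$ after it self-elects, that the edgeless rounds $[iD+1,(i+1)D-1]$ admit no such flow, and that the hypothesis of no leader at the start of round $iD+1$ excludes any prior awareness; this is exactly the paper's proof and makes the coin-resampling machinery unnecessary.
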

\begin{proof}
If there is no node that elects itself leader until the end of round 
$(i+1)D -1$, we are done. Therefore, let node $\ell$ elect itself leader 
in some round in $[iD+1, (i+1)D -1]$. The adversarial strategy (cf.\ 
Algorithm~\ref{alg:adv}), however, ensures that there is no communication 
going out from $\ell$ until round $(i+1)D$. From Observation~\ref{obs:effective}, it follows that $\ell$ cannot 
become an effective leader until the end of round $(i+1)D -1$.
\end{proof}

\begin{lemma}\label{lem:prob}
Consider any $i > 0$.  If there is no effective leader up until the end of 
round $iD-1$, then, with probability at least $1/2-\varepsilon$, for any fixed constant $\varepsilon >0$, there will be no leader 
node at the end of round $iD$.
\end{lemma}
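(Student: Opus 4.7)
The plan is to exploit the adversary's structure in round $iD$: line~7 of Algorithm~\ref{alg:adv} independently removes each existing node with probability $1/2$. Under the hypothesis that no effective leader exists through the end of round $iD-1$, Observation~\ref{obs:effective} together with the agreement condition imply that at the end of round $iD-1$ at most one node, call it $\ell^*$, can have $\lead \neq \bot$; moreover, if $\ell^*$ exists, then $\lead_{\ell^*} = \ell^*$ and $\ell^* \in V^{iD-1}$. The first step is to observe that the event ``$\ell^*$ does not exist, or $\ell^*$ is removed by the adversary at the start of round $iD$'' occurs with probability at least $1/2$: trivially in the first case, and by Algorithm~\ref{alg:adv} in the second.

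The main obstacle is the second step: showing that, except with probability at most $\varepsilon$, no other node $u \neq \ell^*$ sets $\lead_u$ to a non-$\bot$ value during the local-computation phase of round $iD$. At that phase $u$'s view is a function only of its history through round $iD-1$ and its private coins, and in particular is independent of the adversary's coin that decides whether $\ell^*$ survives. If the spontaneous-election probability $p$ were not negligible, then by independence, with probability roughly $p/2$ both $\ell^*$ and $u$ would carry $\lead \neq \bot$ in $G^{iD}$; since $\ell^*$'s identifier was sampled uniformly at random from $\{1,\ldots,n^5\}$ and is not known to $u$ (as $\ell^*$ is ineffective, nobody else has ever observed its identifier), the two values would disagree except with probability $O(1/n^5)$, which would violate the agreement condition. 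Because the algorithm must always satisfy agreement, this forces $p \le \varepsilon$ for any fixed $\varepsilon > 0$ once $n$ is large enough.

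Combining the two events by a union bound, with probability at least $1/2 - \varepsilon$, both $\ell^*$ is removed (or never existed) \emph{and} no other node spontaneously elects a leader in round $iD$. The stability condition prevents any node from resetting its $\lead$ variable without its current leader departing, and the communication phase of the round does not itself modify any $\lead$ value. Hence no node in $V^{iD}$ has $\lead \neq \bot$ at the end of round $iD$, as the lemma claims.
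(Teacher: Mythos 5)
Your overall strategy --- isolate the (at most one) node $\ell^*$ that may already hold $\lead_{\ell^*}=\ell^*$, kill it with the adversary's probability-$1/2$ removal coin, and use an agreement/independence argument to rule out any other node electing in round $iD$ --- is the same skeleton as the paper's proof, which splits into the case of an existing ineffective leader (your ``$\ell^*$ exists'' branch) and the case of no leader at all. The first branch is handled correctly and in essentially the same way as the paper.

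The gap is in the branch where $\ell^*$ does \emph{not} exist. Your second step derives its contradiction from a collision with a surviving $\ell^*$: if the algorithm simply never produces a self-elected node by round $iD-1$ (so $\Prob{\ell^* \text{ exists}}=0$), the product ``$p\cdot\tfrac{1}{2}\cdot\Prob{\ell^*\text{ exists}}>0$ violates agreement'' places no constraint whatsoever on the spontaneous-election probability $p$, yet your final union bound still charges only $\varepsilon$ to this event. In that branch you must rule out round-$iD$ self-elections by a separate argument, and this is exactly where the paper's $1/2$ and $\varepsilon$ actually come from: (i) among nodes already present in round $iD-1$, at most one can self-elect in round $iD$ with positive probability (two such candidates, being disconnected during $[(i-1)D+1,iD-1]$ and computing before communicating in round $iD$, would both elect with positive probability, contradicting agreement for a Las Vegas algorithm), and that single candidate is itself removed with probability $1/2$ --- note the $1/2$ here is \emph{not} the removal of $\ell^*$, which does not exist; and (ii) the fresh nodes inserted in round $iD$ have no history, so at most one identifier value can trigger immediate self-election, and since Algorithm~\ref{alg:adv} draws identifiers uniformly from a pool of size at least $n^4$, some new node receives that identifier with probability at most $1/n^2\le\varepsilon$. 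Your proof never mentions the newly inserted nodes or the role of the random identifiers in bounding their election probability, which is precisely the ingredient the adversary's id-assignment was designed to supply. A smaller inaccuracy: ``nobody else has ever observed $\ell^*$'s identifier'' is not implied by ineffectiveness --- nodes may well have exchanged identifiers in the earlier fully connected rounds $jD$; what actually prevents $u$ from electing $\ell^*$ is the validity condition, since $u$ has no evidence that $\ell^*$ was ever a leader.
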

\begin{proof}
 At the end of round $iD-1$, there are two possible exhaustive cases.
In the first case, there is no leader until the end of round $iD-1$.
Since $\A$ is a Las Vegas algorithm, it follows that there can be at most $1$ id $id_1$ such that a node equipped with $id_1$ has nonzero probability of (immediately) becoming the leader upon entering in round $iD$.
(Otherwise, a simple indistinguishibility argument provides a contradiction to $\A$ achieving agreement with probability $1$.)
According to Algorithm~\ref{alg:adv}, the adversary chooses new ids uniformly at random from at least $n^4$ ids (discarding previously used ids).
Therefore, the probability of choosing $id_1$ for a new node in round $i D$ is at most $1/n^2$.
Now suppose that a node that was present in the network in round $iD-1$ elects itself as the leader in round $iD$. Since, all such nodes are removed with probability $1/2$, the lemma follows for this case.

In the second case, there is a leader $\ell$ but other nodes are unaware.
The leader $\ell$ is removed in round $iD$ with probability $1/2$ before it gets a chance to inform other nodes that it is the leader. 
\end{proof}

\begin{theorem} \label{thm:lowerbound}
Let $\A$ be a randomized Las Vegas algorithm that achieves dynamic leader election. 
The probability that $\A$ terminates in at most $iD$ rounds is at
most $1-\frac{1}{2^{2i+1}}$.
In particular, algorithm $\A$ requires $\Omega(D\log n)$ rounds to guarantee termination with high probability.
%terminates in $cD \log n$ for any fixed  $c>0$ rounds is at most $1 - \frac{1}{n^{2c}}$.
\end{theorem}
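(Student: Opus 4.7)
The plan is to iterate Lemmas~\ref{lem:induct} and~\ref{lem:prob} across successive blocks of $D$ rounds, turning per-block survival probabilities into an exponential lower bound on the probability that no leader has been elected after $iD$ rounds. Define $p_i = \Pr[\text{no leader exists at the end of round } iD]$, with the convention $p_0 = 1$ (the network starts with no leader).

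For the inductive step, I would fix $i \ge 1$ and condition on the event ``no leader at end of round $(i-1)D$.'' Applying Lemma~\ref{lem:induct} with index $i-1$ shows that, deterministically under this event, there is no effective leader up through the end of round $iD - 1$. Hence the hypothesis of Lemma~\ref{lem:prob} is met, so with probability at least $1/2 - \varepsilon$ there is still no leader at the end of round $iD$. Unrolling, I get $p_i \ge (1/2 - \varepsilon)\, p_{i-1}$, and therefore $p_i \ge (1/2 - \varepsilon)^i$. Choosing, say, $\varepsilon = 1/4$ (which is allowed since Lemma~\ref{lem:prob} holds for any fixed $\varepsilon > 0$), yields $p_i \ge (1/4)^i = 2^{-2i} \ge 2^{-(2i+1)}$. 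Since the algorithm cannot have terminated at round $iD$ if no leader has been elected yet, the probability that $\A$ has terminated by round $iD$ is at most $1 - p_i \le 1 - 2^{-(2i+1)}$, giving the first claim.

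For the high-probability conclusion, observe that in order to have $\Pr[\A \text{ terminates by round } iD] \ge 1 - 1/n$, we need $2^{-(2i+1)} \le 1/n$, i.e. $2i + 1 \ge \log_2 n$, which forces $i = \Omega(\log n)$ and hence termination time $iD = \Omega(D \log n)$.

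The only subtle point — and the place I would be most careful — is verifying that the two lemmas actually compose as a conditional chain. Lemma~\ref{lem:prob} is stated as a conditional guarantee given its hypothesis, and Lemma~\ref{lem:induct} supplies that hypothesis deterministically from the event at the previous block boundary, so the conditioning matches cleanly and no independence of coin flips across blocks is needed; the only randomness used in each step is the removal of surviving nodes in the transition round $iD$, which is fresh and oblivious to the algorithm's internal state. This is what makes the simple multiplicative recursion $p_i \ge (1/2 - \varepsilon) p_{i-1}$ legitimate rather than requiring a more delicate union-style argument.
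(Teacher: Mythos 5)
Your inductive skeleton is exactly the paper's: compose Lemma~\ref{lem:induct} (which deterministically converts ``no leader at the end of round $(i-1)D$'' into ``no \emph{effective} leader through round $iD-1$'') with Lemma~\ref{lem:prob} via the chain rule to get $\Pr[\text{no leader at end of round } iD] \ge (1/2-\varepsilon)^i \ge 4^{-i}$, and your remark that no cross-block independence is needed because Lemma~\ref{lem:prob} is a conditional statement is correct and is precisely how the paper uses it.

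The gap is in your last step, ``since the algorithm cannot have terminated at round $iD$ if no leader has been elected yet, the probability that $\A$ has terminated by round $iD$ is at most $1-p_i$.'' Termination time $T$ is defined per node and is conditional on survival: a node $u$ with $\lead_u=\bot$ in round $r$ must set its variable by round $r+T$ \emph{only if} $u\in V^{[r,r+T]}$. Under the adversary of Algorithm~\ref{alg:adv}, every node is removed with probability $1/2$ each $D$ rounds, so on the event ``no leader at the end of round $iD$'' it may still be that no node has actually remained leaderless in the network for $iD$ consecutive rounds, in which case the termination-time-$iD$ guarantee is vacuously satisfied and your event does not witness non-termination. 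This is why the paper's proof introduces the survival event $A_i=\{\bigcap_{r=0}^{iD}V^r\neq\emptyset\}$, argues via validity and stability that a surviving node in a leaderless network at round $iD$ must have had $\lead_u=\bot$ throughout $[1,iD]$, and bounds $\Pr[A_i\cap B_i]$ rather than $\Pr[B_i]$ alone; the extra factor of $\tfrac12$ in the theorem's bound $2^{-(2i+1)}$ (versus your $4^{-i}=2^{-2i}$ for $B_i$ by itself) is exactly where this intersection is absorbed. To repair your argument you would need to exhibit, with the claimed probability, a node that both survives all $i$ removal rounds and never sets its leader variable, and check that the conditioning on survival does not interfere with the per-block ``no leader'' estimates.
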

\begin{proof}
  Let $A_i$ be the event that $\bigcap_{r=0}^{iD} V^r \ne \emptyset$, i.e., some node remains in the network until round $iD$. 
  According to the adversarial strategy (cf.\ Algorithm~\ref{alg:adv}) we have Pr$[A_i]= 2^{-i}$. Let $B_i$ denote the event that there is no leader at the end of round $iD$.
  Note that, if some node $u$ has remained in the network and there is no leader in round $iD$, then, by validity and stability, $u$ has not set $\leader(u)$ at any point up to round $iD$. 
  Inductively applying Lemmas~\ref{lem:induct} and \ref{lem:prob} shows that  Pr$[B_i] \geq (1/2 - \varepsilon)^i \ge 4^{-i}$, for any $i>0$.
In other words, the probability that some node (that remained in the network) has not yet set its leader variable at any point during $[1,iD]$ is $Pr[A_i \cap B_i] \geq \frac{1}{2^{2i+1}}$.
%For this to hold
%Choosing $i = c\log n$, this probability is $\leq 1- \frac{1}{n^c}$. It means if %algorithm runs for $\mathcal{O}(D\log n)$, termination probability is $\leq 1- %\frac{1}{n^c}$ which implies, if the termination probability is $> 1- %\frac{1}{n^c}$, then algorithm should run for $\Omega(D \log n)$.
\end{proof}

% !TEX root = main.tex

\section{Randomized Dynamic Leader Election}\label{sec:upper}

We now present a randomized algorithm for solving dynamic leader election that
terminates with high probability in $O(D \log n)$ rounds. The algorithm must be
designed to maintain a leader as long as it is in the network and elect a new
leader when the current leader leaves the network. The high level framework of
the algorithm is as follows. When an elected leader is in the network, the
leader floods out timestamped beep messages every round that are heard by all
other nodes.  When a node has heard beep messages with timestamp in the last
$D$ rounds, it assumes that the leader is still around. Otherwise, it can
conclude that the leader has left the network and  sets their leader variable
to $\bot$. A new election process then starts afresh during which the nodes
elect a new leader. Algorithm~\ref{alg:frame} presents the detailed pseudo code.   
%Our algorithm operates in phases with each phase consisting of  $2D$  rounds. Specifically, a round number of the form $2iD+1$ is the start of a new phase that ends in round $2(i+1)D$.

Given the framework described above, we can focus on the leader election process which starts when  the leader has left the network and a new leader must be elected. In general, all nodes are aware of the current round number and the leader election process will start at the next round number that is of the form $2iD+1$ for some integer $i$. 
Given that our emphasis is on the leader election process, we will assume for simplicity that the leader election process starts at round number 1.  Our algorithm operates in phases with each phase consisting of  $2D$  rounds. The first phase is rounds 1 through $2D$ and the second phase is from rounds $2D+1$ through $4D$, and so on. In general, a round number of the form $2iD+1$ is the start of a new phase, and that phase ends in round $2(i+1)D$.
Since all nodes see a common clock, the round numbers in which new phases begin is common knowledge.  
  
  \begin{algorithm}
\caption{High level framework of the leader election algorithm.}    
\label{alg:frame}      
\begin{algorithmic} [1]  
\FOR {every round at every node $u$}
\IF{leader = $u$}
\STATE Generate and flood a \beep\ message every round under the condition that only the latest \beep\ message is propagated and \beep s older than $D$ rounds are discarded.
\ELSIF{the node $u$ has entered the network in the current round}
\STATE Wait passively until one of the two events occur:
\IF{for a full phase (i.e., all 2D rounds of a phase) $u$ has not heard a beep timestamped in the last $D$ rounds}
\STATE Become an active participant of the leader election process (Algorithm~\ref{alg:election}) starting from the next phase.
\ELSE[$u$ has received a beep message:]
\STATE Set leader variable to the id contained in the latest beep message.
\ENDIF

\ELSE
%\STATE $B = $ latest \beep\ message provided  the message was generated in last $D$ rounds.
\IF{no \beep\ message from last $D$ rounds was heard}
\STATE Start the election process (Algorithm~\ref{alg:election}) starting from the next phase.
\ENDIF
\ENDIF
\ENDFOR
\end{algorithmic}
\end{algorithm}

In each phase, the network attempts to elect a leader. 
We say that the network is \emph{successful} in a phase if a leader  is elected (i.e., sets its own leader variable to itself) in the first $D$ rounds and  the leader stays in the network till the end of the phase. In a successful phase, the leader  announces its leadership via timestamped beep messages sent out in (every round of) the second half of the phase. This ensures that the current leader election process has terminated.
During a phase if either (i) no node elects itself as leader or (ii) if the elected leader leaves the network before  the end of the phase, then we say that the network \emph{failed} in that phase.  In a failed phase, there is no guarantee that the leader election process would have terminated.

When a new node enters the network, it starts out as a passive participant
because it will be unaware of whether a leader is present or not. If a passive
participant  hears a beep message with timestamp within last $D$ rounds, it
will appropriately  update its leader variable and cease to be passive.
Alternatively, when a passive participant has been in the network for a full
phase (i.e., all 2D rounds of some phase) without receiving such a message, it knows that a leader election process is still underway and becomes active starting from the next phase.
When a node is passive, it will not be a candidate for becoming a leader, but, nevertheless, it participates in the algorithm by forwarding messages.

At the start of a new phase, each active node $u$ draws a random number from
the exponential distribution with parameter $\lambda = 2^{p_u}$, where $p_u$ is
the number of phases spent by that node in the current election process. We
assume that the random numbers are generated with sufficient precision such
that no two generated random numbers are equal. Each node then creates a
message comprising its exponentially distributed random number\footnote{For simplicity, our description assumes that the exponential random numbers are generated by each node and the generated number is explicitly represented in the rank message. This explicit representation could blowup the number of bits in the rank messages. We note that exponential random numbers are typically generated from uniformly distributed random numbers~\cite{Dev86}. So instead of passing exponential random numbers explicitly, we can pass them implicitly by just passing the uniformly distributed random numbers, with the exponential random numbers computed at the destinations. An added advantage is that it is easy to see that $O(\log n)$ bits of the uniform random numbers is sufficient to ensure that two (implied exponential) random numbers will be equal with probability at most $1/n^{k}$ for any fixed $k$. } and its unique id; we call this 
its \emph{rank message}. 
Given two rank messages, we say that the rank message
containing the smaller random number is the smaller rank message. The algorithm
aims to elect the node that generated and flooded the smallest rank message as
the leader.  Towards this goal, each node broadcasts its rank message, but a node
will only continue to send a message $m$, if $m$ is the smallest rank
message it has encountered so far in the current phase.  
\begin{lemma}\label{lem:unique}
With probability at least $1-1/n^{k}$ for any fixed constant $k$, $O(\log n)$ sized rank messages are sufficient to ensure that there is at most one node $\ell$ that survives the first $D$ rounds of the phase and does not receive  a rank message smaller than its own rank in the current phase. 
%We will show subsequently that 
%Either $\ell$ does not exist or there is a single unique $\ell$. 
\end{lemma}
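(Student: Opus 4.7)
The plan is to split the argument into two parts: first a probabilistic argument that with $O(\log n)$-bit random numbers all generated ranks are pairwise distinct, and second a deterministic argument (conditioned on distinctness) that uses flooding and the bounded communication diameter to rule out two co-surviving "un-beaten" nodes.

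For the first part, following the footnote's suggestion, I would have each active node draw a uniformly random string of $c \log n$ bits (for a suitably chosen constant $c$ depending on $k$) and use this string to generate its exponential rank via inverse transform sampling. Since the inverse CDF of the exponential is injective, two ranks are equal iff the underlying uniform bit strings are equal. For any fixed pair of nodes, the probability of collision is therefore at most $2^{-c\log n} = n^{-c}$. Union bounding over the at most $\binom{n}{2} < n^2$ pairs of active nodes gives that all ranks are pairwise distinct with probability at least $1 - n^{2-c}$; choosing $c = k+2$ yields the required $1 - n^{-k}$ bound, and each rank message, consisting of a $c\log n$-bit random string together with the unique id from $\{1,\dots,n^5\}$, has size $O(\log n)$.

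For the second part I would condition on the event that all ranks are distinct and argue by contradiction. Suppose two distinct nodes $\ell_1,\ell_2$ both survive the first $D$ rounds of the phase and neither receives a rank message smaller than its own during that interval. Without loss of generality assume $r_{\ell_1} < r_{\ell_2}$. By the algorithm's rule, $\ell_1$ begins flooding its own rank message in round $1$ of the phase (its own rank is the smallest it has seen so far), and by the smallest-so-far forwarding rule every recipient continues to propagate $\ell_1$'s rank unless they later encounter an even smaller one (which by distinctness and minimality among survivors will not overtake $\ell_1$'s value before reaching $\ell_2$ in the relevant direction). Since $\ell_2 \in V^{[1,D]}$, the bounded communication diameter assumption guarantees that $\ell_2$ receives $\ell_1$'s rank message by round $D$, contradicting the assumption that $\ell_2$ never receives a rank strictly smaller than its own.

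Combining the two parts yields the claim with probability at least $1 - n^{-k}$. I expect the main subtlety to be in the second part: the definition of flooding requires that \emph{every} recipient also continues to forward, but the algorithm only forwards the current smallest rank; I should therefore check carefully that the smallest-so-far propagation rule still satisfies the hypothesis of the flooding guarantee for the \emph{globally} smallest rank message originating from a surviving node. The point is that the globally smallest rank message is never discarded by any recipient (nothing smaller exists to displace it), so its propagation satisfies the flooding definition verbatim, and the bounded communication diameter applies without modification.
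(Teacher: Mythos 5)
Your proposal is correct and follows essentially the same route as the paper: condition on all ranks being pairwise distinct (which the $O(\log n)$-bit uniform seeds guarantee up to probability $1/n^k$), then derive a contradiction from the bounded-communication-diameter flooding guarantee, since a second surviving un-beaten node would have to receive the smaller rank message within $D$ rounds. Your closing observation --- that the smallest-so-far forwarding rule satisfies the flooding definition verbatim only for the \emph{globally} smallest rank message, which is the right message to track --- is a point the paper's one-line proof glosses over, and it is the correct way to close that gap.
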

\begin{proof} We will argue under the condition that no two rank messages are equal (which happens with probability at most $1/n^k$.
We only need to show that there cannot be two nodes $\ell_1$ and $\ell_2$ that
satisfy the condition for $\ell$ stated in the observation. Without loss of
generality, suppose $\ell_1$ generated the smaller random number and the
corresponding rank message was flooded. Since both nodes remain in the network
for $D$ rounds, node $\ell_2$ must have received that smaller rank message,
thus leading to a contradiction.
\end{proof}

If this node $\ell$ does not exist, then the phase has failed. If $\ell$ exists, then, by comparing its own rank message with the smallest rank message, it can clearly detect that it is $\ell$ and it elects itself the leader. In every subsequent round until $\ell$ is removed from the network,  $\ell$  sends out  a beep message containing its (unique) id and a timestamp  to indicate its presence in the network. The beep messages are conditionally flooded with the condition {\tt <current beep has the latest time stamp amongst received beeps and the time stamp is within last D rounds>}. Every other node therefore floods the latest beep message it has heard. Furthermore,  if a node receives a beep message with a timestamp within the last $D$ rounds, it sets its leader variable to $\ell$'s id. See Algorithm ~\ref{alg:election} for the detailed pseudo code.
%Notice that if the leader $\ell$ is alive till the end of  the phase, then, we are guaranteed that its beep message reaches all nodes that have been in the network for the entire duration of the current phase.
\small
\begin{algorithm}[h]
\caption{Protocol executed by every node $u$ in a phase of the leader election process.}    
\label{alg:election}      
\begin{algorithmic} [1]  
%\FOR{phase $p = 1, 2, 3, \ldots$}
%\item[] \textbf{In every phase do:}
%\item[]
\item[] \textbf{First $\boldmath D$ rounds of the phase:}
%\FOR{Each node $u$ in parallel}
%\STATE \textcolor{red}{Peter: Why do we have the previous line here? Isn't the whole code supposed to be just for $u$? (as mentioned in the caption)}
\STATE Let $p_u$ denote the number of phases that $u$ has been active in the current leader election process.
\STATE Generate a random number drawn from the exponential distribution with parameter $\lambda_u = 2^{p_u}$.
\STATE Broadcast a rank message containing $u$'s id and the generated random number.
%\ENDFOR
\STATE During the first $D$ rounds, $u$ broadcasts the smallest rank message encountered so far. 
%In case of receiving equal (smallest) ranks, node $u$ prefers the rank message of a distinct node to its own rank message. \label{line:ties}
\item[]
\item[] \textbf{At the end of first  $\boldmath D$ rounds of the phase:}
\IF{the smallest rank message was generated by $u$}
\STATE Node $u$ concludes that it is the unique node  $\ell$ (cf.\ Lemma~\ref{lem:unique}) 
\STATE Set leader variable to self.
\STATE Generate and flood a \beep\ message every round under the condition that only the latest \beep\ message is propagated and \beep s older than $D$ rounds are discarded.
\STATE  Exit the leader election process and return to Algorithm~\ref{alg:frame}.

\ELSE
\STATE Node $u$ knows it is not $\ell$.
\ENDIF
\item[]
\item[] \textbf{Second $\boldmath D$ rounds of the phase:}
\STATE \COMMENT{The following code is executed only by non-$\ell$ nodes}
\FOR{each round of the second $\boldmath D$ rounds of the phase}
\IF {at least one \beep\ was received}
	\STATE $B \leftarrow$ the \beep\ with the latest time stamp among the set of \beep s received so far in this phase.
	\STATE \COMMENT{Since \beep s are designed to be discarded in $D$ rounds,  $B$ must have been generated in the current phase.}
		\STATE Set leader variable to the id contained in $B$. 		
		\STATE Broadcast $B$.
		\STATE  Exit the leader election process and return to Algorithm~\ref{alg:frame}.

\ENDIF
\ENDFOR
\end{algorithmic}
\end{algorithm}
\normalsize
\begin{lemma}\label{lem:correct}
  With high probability, Algorithm~\ref{alg:frame} guarantees agreement, validity, and stability.
\end{lemma}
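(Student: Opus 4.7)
The plan is to verify the three properties (agreement, validity, stability) separately, conditioning throughout on the event (holding with high probability by Lemma~\ref{lem:unique} and a union bound over the $\text{poly}(n)$ phases in an execution) that in every phase at most one node can self-elect. I would begin with the two easier properties. For \textbf{validity}, I would enumerate the places in Algorithms~\ref{alg:frame} and~\ref{alg:election} where a node $u$ writes a non-$\bot$ value to $\lead_u$: either (i) the self-election line after $u$ wins the rank contest, in which case $u=\ell$ and $\ell$ is the leader at the current round; or (ii)/(iii) $u$ sets $\lead_u$ to the id inside a freshly received beep. The conditional flooding rule discards any beep whose timestamp is older than $D$ rounds, so the originator $\ell$ was the leader in some round in $[r-D,r]\subseteq[r-D-1,r]$. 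For \textbf{stability}, the argument is direct from the bounded communication diameter: a node $u$ only discards $v$ as leader when it observes a $D$-round window with no beep in the last $D$ rounds; but if $v$ had remained in the network throughout that window it would have been flooding fresh beeps, and the diameter guarantee, applied to $u$ (which must itself have been alive long enough to notice the absence), would have delivered at least one of them to $u$.

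For \textbf{agreement}, which is the subtle property, I would argue by contradiction. Suppose at some round $r$ two nodes $u,v$ satisfy $\lead_u=\ell_1$, $\lead_v=\ell_2$ with $\ell_1\ne\ell_2$. Trace each back to a self-election: $\ell_i$ won the rank contest of some phase $P_i$ (by the conditioning on Lemma~\ref{lem:unique}, $P_1\ne P_2$; WLOG $P_1$ precedes $P_2$). For $\ell_2$ to have become an active participant of $P_2$, the framework in Algorithm~\ref{alg:frame} demands that $\ell_2$ either (a) was already active and heard no beep with a timestamp in the last $D$ rounds, or (b) was a newly arrived passive node that spent a full phase in the network without any such beep. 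In either case there is a window $W$ of at least $D$ consecutive rounds, ending just before $P_2$, during which $\ell_2\in V^W$ but $\ell_2$ received no beep. By the bounded-diameter assumption applied to the flooding of $\ell_1$'s beeps, this forces $\ell_1\notin V^W$, so $\ell_1$ has left the network by the start of $P_2$. Since $\ell_1$ never re-enters, no beep originated by $\ell_1$ can carry a timestamp within $[r-D,r]$, so $u$ cannot be holding $\lead_u=\ell_1$ at round $r$ consistently with the update rules of Algorithm~\ref{alg:frame}, a contradiction.

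I expect the main obstacle to lie in this last step, in particular in cleanly handling the boundary cases where a passive $\ell_2$ arrives while $\ell_1$ is still in the network: I have to align the ``waiting for a full phase'' condition of Algorithm~\ref{alg:frame} with the $D$-round delivery guarantee, so that the window $W$ above really does exceed $D$ rounds of co-residence of $\ell_1$ and $\ell_2$. Once that is pinned down, the contradiction closes, and combining the three property arguments with the single union bound from the start yields the claim with high probability.
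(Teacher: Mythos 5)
Your proposal is correct and follows essentially the same route as the paper: validity and stability read off directly from the beep-timestamp rule and the $D$-round flooding guarantee, and agreement is a contradiction argument that splits on whether $\ell_1$ and $\ell_2$ self-elected in the same phase (ruled out by Lemma~\ref{lem:unique}) or in different phases. Your handling of the different-phase case is just the contrapositive of the paper's (the paper argues that $\ell_1$'s ongoing beeps would have prevented $P_2$ from starting, while you argue that $P_2$'s beep-free prerequisite window certifies $\ell_1$'s departure too long ago for any node to still hold it as leader), and your explicit union bound over phases for the Lemma~\ref{lem:unique} event is a minor tightening the paper leaves implicit.
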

\begin{proof}
Validity and stability conditions are quite straightforward. Validity holds because a node (that is not a leader) sets its leader variable only after hearing a beep message with timestamp in the last $D$ rounds. Stability, similarly, holds because  a node must have heard a beep timestamped within the last $D$ rounds if the node was still in the network. Therefore, when it has not heard a beep with a timestamp in the last $D$ rounds, it can safely conclude that the leader has left the network.

We now focus on the agreement condition. Suppose for the sake of contradiction that two nodes have their leader variables set to two different nodes $\ell_1$ and $\ell_2$.  They have both been leaders in the last $D$ rounds, which is not sufficient time for one leader to have left the network and another to have been elected leader. Therefore, they both considered themselves leaders in some round $r$. This implies that either (i) they both elected themselves leaders in the same phase (which is not possible as a consequence of Lemma~\ref{lem:unique}) or (ii) one of them (say $\ell_1$ without loss of generality) was elected leader in a phase prior to the other. Since $\ell_1$ is still in the network, it must have been sending beep messages every round since its election, which implies that another election process would not have started until round $r$. Thus $\ell_2$ could not have elected itself as leader. Thus no node can consider $\ell_2$ as leader without violating the validity constraint, thereby leading to a contradiction.
\end{proof}

We would now like to show that the termination time of any node $u$ that is without a leader is in $O(D \log n)$ with high probability of the form $1 - O(1/n)$. Towards this goal, we argue based on a node $u$ whose leader variable is set to $\bot$ either because it has just entered the network or has not heard a beep message in the last $D$ rounds. We now state several assumptions, all made without any loss of generality.  
\begin{compactenum}
\item We assume that $u$ is active; it only takes one phase for it to turn active. 

\item We assume  that we are at the start of the first phase in which $u$ is active and does not have a leader. 

\item If $u$ leaves the network within $O(D \log n)$ rounds, then the termination claim holds vacuously. Therefore, we assume $u$ remains in the network for at least $k D \log n$ rounds for some sufficiently large constant  $k>0$ that we will fix later. 
\item Finally, we assume that at time step $k D \log n$, no node has been active and without a leader for longer than $u$. If such a node existed, then we would base our analysis on that node rather than $u$.
\end{compactenum}
Our goal now is to show that, within these $k D \log n$ rounds, $u$  elects a leader with high probability.

For analysis purposes, color the nodes in the graph in round $1+(k-4) D \log n$ blue. In particular, $u$ is colored blue. The focus of our analysis will be the $2\log n$ phases from round $1+(k-4) D \log n$ to round $k D \log n$; we call these $2\log n$ phases {\em critical}.  Our next lemma shows that the probability with which a non-blue node produces the smallest random number during any one of these critical phases is very small. Before we state the lemma, we recall a fundamental property of the minimum value of $m$ exponential random numbers (cf. Chapter 8 in~\cite{MU05}). Let $X_1, X_2, \ldots, X_m$  be $m$ exponential random variables with parameters $\lambda_1, \lambda_2, \ldots, \lambda_m$, respectively. The probability that $X_i$, $i \in [m]$, is the smallest among all the $m$ random variables is given by:
\begin{equation} \label{eqn:exp}
\text{Pr}(\min\{X_1, X_2, \ldots, X_m\} \text{ is } X_i) = \frac{\lambda_i}{\lambda_1 + \lambda_2 + \cdots + \lambda_m}.
\end{equation}

\begin{lemma} \label{lem:non-blue}
Given that $k$ is a sufficiently large constant, the probability that a non-blue node draws the smallest random number in any one of the critical phases is at most $1/n^\gamma$, where $\gamma$ is a constant depending only on $k$.
\end{lemma}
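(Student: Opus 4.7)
The plan is to apply the standard formula for the argmin of independent exponential random variables (equation~(\ref{eqn:exp})), exploiting the exponential gap that develops between $\lambda_u = 2^{p_u}$ and $\lambda_v = 2^{p_v}$ for any non-blue node $v$ by the time the critical phases begin. The entire argument then reduces to plugging three simple bounds into that formula and taking a union bound.

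The three bounds to establish are the following. First, since $u$ is active and without a leader throughout $[1, kD\log n]$ --- which is exactly the regime under which the claim is relevant, with $u$ surviving by assumption~3 --- its phase counter $p_u$ grows by one every phase. So at the start of the first critical phase, $p_u \geq (k-4)(\log n)/2$, and hence $\lambda_u \geq n^{(k-4)/2}$, with the bound only improving in later critical phases. Second, a non-blue node $v$ must have entered the network strictly after round $1+(k-4)D\log n$ (``blue'' is defined by presence in that round and each node enters at most once), so $v$ is active in at most $2\log n$ phases in total, giving $p_v \leq 2\log n$ and therefore $\lambda_v \leq n^{2}$. Third, in any single phase at most $n$ non-blue nodes are simultaneously active, since $|V^r|\leq n$.

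With these bounds in hand, I would fix an arbitrary critical phase $j$, condition on the history up to the start of phase $j$ (so that the set of active nodes and their $p$-values become deterministic), and apply equation~(\ref{eqn:exp}) to the fresh exponential draws of phase $j$:
\begin{equation*}
  \Prob{\text{some non-blue node has the smallest draw in phase } j}
  \;=\; \frac{\sum_{v\text{ non-blue, active}} \lambda_v}{\sum_{w\text{ active}} \lambda_w}
  \;\leq\; \frac{n\cdot n^{2}}{\lambda_u}
  \;\leq\; n^{\,3-(k-4)/2}.
\end{equation*}
A union bound over the $2\log n$ critical phases and a large-enough choice of $k$ (large enough to absorb the factor $n^3$ and the factor $2\log n$, leaving $n^\gamma$ in the denominator) yields the claim.

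The step that warrants the most care, rather than posing a genuine obstacle, is the conditioning underlying the above probability computation. Blue/non-blue status is a deterministic function of the adversary's graph sequence and is unaffected by coin flips, while the lower bound on $p_u$ depends only on the event that $u$ has not already terminated --- exactly the event under which we invoke the lemma. Once this is spelled out, the rate parameters in phase $j$ are fixed by the conditioning, the new exponential draws in that phase are independent, and equation~(\ref{eqn:exp}) applies in the stated form.
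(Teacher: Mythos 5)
Your proposal is correct and follows essentially the same route as the paper: the same three bounds ($\lambda_u \ge n^{(k-4)/2}$ from the phase counter, $\lambda_v \le n^2$ for non-blue nodes since they entered after the blue round, at most $n$ active non-blue nodes), plugged into Equation~(\ref{eqn:exp}) and finished with a union bound over the $2\log n$ critical phases. Your explicit treatment of the conditioning on the history before each phase is a bit more careful than the paper's write-up, but it is not a different argument.
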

\begin{proof}
Recall that we have assumed $u$ will remain in the network until round $k D \log n$. To maximize the probability that a non-blue node is chosen as leader, we must consider the case when all blue nodes other than $u$  are replaced by non-blue nodes starting from the first critical phase. Such non-blue nodes can generate exponential random numbers with parameters at most $2^{2\log n} = n^2$ during the $2\log n$ critical phases because they could have entered the network in the first critical phase or sometime thereafter.  Node $u$ on the other hand will generate exponential random numbers with a parameter no less than $2^{\frac{(k-4) D \log n}{2}} = n^{(k-4)/2}$. Using Equation~\ref{eqn:exp}, we see that a non-blue node $v$ will draw the smallest random number in a critical phase with probability at most
\[
\frac{n^2}{n^{(k-4)/2} + (n-1) n^2} < \frac{1}{n^{\frac{k-4}{2} - 1} }.
\]
Using the union bound over all non-blue nodes and over all critical phases, we can conclude that the probability that a non-blue node will draw the smallest random number in any of the critical phases is at most  $1/{(n^{\frac{k-4}{2} - 1})} \le 1/n^\gamma$
for any constant $\gamma$ provided $k$ is sufficiently large. In particular, when $k \ge 14$, the probability is at most $1/n$.
\end{proof}

In light of Lemma~\ref{lem:non-blue}, we condition the rest of our analysis on the event that a blue node generates the smallest random number in every critical phase --- an event that occurs with probability $(1 - \frac{1}{n})$ when $k \ge 14$. Recall that we defined a phase to be successful if a leader was elected in that phase and the leader remained in the network till the end of the phase; otherwise, the phase is said to be a failure. 
%Note that a phase that is a failure may in fact complete the leader election process in the sense that a leader $\ell$ is elected and does not live to the end of the phase, but nevertheless all other nodes become aware and elect $\ell$ as leader, thus concluding the current leader election process. We however, pessimistically assume that  a phase that is a failure does not lead to the conclusion of the leader election process.  
Our goal now is to show that regardless of the strategy employed by the adversary, with high probability, there will be a critical  phase that is successful.

\begin{lemma}
Regardless of the strategy employed by the oblivious adversary, at least $\log n$ (out of the $2 \log n$) critical  phases  are each successful with probability at least $\frac{1}{2}$.
\end{lemma}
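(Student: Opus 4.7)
The plan is to call a critical phase \emph{good} when its success probability (under the high-probability conditioning event from Lemma~\ref{lem:non-blue}) is at least $1/2$, and then bound the number of \emph{bad} phases by $\log n$, which leaves at least $\log n$ good phases out of $2\log n$. A phase succeeds exactly when the \emph{candidate}---the node producing the globally smallest rank message---survives all $2D$ rounds. By Equation~(\ref{eqn:exp}), conditional on the candidate being blue, it equals a specific active blue node $v$ with probability $\lambda_v^{(i)}/\Lambda_i$, where $\lambda_v^{(i)}=2^{p_v^{(i)}}$ and $\Lambda_i=\sum_{v\in C_i}\lambda_v^{(i)}$, with $C_i$ denoting the set of active blue nodes alive at the start of phase $i$. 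Writing $L_i\subseteq C_i$ for the adversarially pre-committed subset that leaves before phase $i$ ends, the phase is good iff $\sum_{v\in L_i}\lambda_v^{(i)}\le \Lambda_i/2$.

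The key step is to track the normalized potential $\mu_i := \Lambda_i/\lambda_u^{(i)} = \sum_{v\in C_i} 2^{p_v^{(i)}-p_u^{(i)}}$. Because $p_u$ and every surviving $p_v$ advance in lockstep, each blue node that survives phase $i$ contributes exactly the same amount to $\mu_{i+1}$ as it did to $\mu_i$; blue nodes that newly activate during the critical window contribute only $O(2/\lambda_u^{(i+1)}) = O(n^{-(k-4)})$ each, which is negligible over $2\log n$ phases since $k\ge 14$. Consequently a bad phase forces $\mu_{i+1} < \mu_i/2$ (modulo this negligible error), while a good phase yields $\mu_{i+1}\le \mu_i$, so $\mu$ is non-increasing overall and each bad phase strictly more than halves it.

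To close the argument I would invoke assumption~4 (no active node has been without a leader longer than $u$) to obtain $p_v\le p_u$ for every $v\in C_i$, so $\mu_i\le |C_i|\le n$; at the other end, assumption~3 ($u$ stays in the network throughout the $k D\log n$ rounds) gives $u\in C_i$ and hence $\mu_i\ge 1$ in every phase. The telescoped product of per-phase decay factors of $\mu$ is therefore at least $1/n$, and since each bad factor is strictly less than $1/2$ while each good factor is at most $1$, at most $\log_2 n$ critical phases can be bad. The step I expect to require the most care is the treatment of blue nodes that transition from passive to active during the critical window, since they inject a small amount of new mass into $\mu$ that must not accumulate into a multiplicative loss large enough to spoil the bound; for $k$ sufficiently large the $\lambda_u\ge n^{k-4}$ lower bound inherited from the setup of Lemma~\ref{lem:non-blue} keeps this slack $o(1)$, but tight bookkeeping on $\lambda_u^{(i)}$ is essential to preserve the $b\le\log n$ conclusion.
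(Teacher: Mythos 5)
Your proposal is correct and follows essentially the same route as the paper: your normalized potential $\mu_i=\sum_{v}2^{p_v-p_u}$ is exactly the paper's $\Psi$, bounded above by $n$ via assumption~4, below by $1$ via assumption~3, and more than halved in every phase whose success probability falls below $1/2$, yielding at most $\log n$ bad phases. The only cosmetic difference is that the paper sums over all blue nodes (so newly activated blue nodes never inject mass and $\Psi$ is cleanly monotone), which avoids the extra bookkeeping you flag for passive-to-active transitions.
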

\begin{proof}
  Define the following time-varying function \[\Psi(r)  = \frac{\sum_{b \in B(r)} 2^{p_b}}{2^{p_u}},\] where $B(r)$ is the set of blue nodes in round $r$ and $p_b$ is the number of phases that node $b$ has been active in the current leader election process at the start of round $r$. We now state three  properties of $\Psi$. 
\begin{compactenum}
\item Recall that no node has been active in the current leader election process for longer than $u$. Therefore, the value of $\Psi$ at the start of the first critical phase is no more than $n$. 
\item Since (i) the number of blue nodes cannot increase and (ii) $p_u \ge p_b$ for every $b \in B$ at all times in the critical phases, $\Psi$ decreases monotonically  with respect to time. 
\item However, since we have assumed that $u$ is in the network till the end of round $k D \log n$, $\Psi$ is always at least 1.
\end{compactenum}
Consider any particular critical phase. Let $B_{\text{start}}$ be the set of blue nodes in the start of the phase in consideration and $B_{\text{end}}$ be the set of blue nodes at the end of the phase. Let $\Psi_{\text{curr}}$ be the value of $\Psi$ at the start of  the current phase in consideration and $\Psi_{\text{next}}$ be the value of $\Psi$ at the start of the phase immediately succeeding the current phase in consideration.  Suppose the adversary enforces a strategy of removing nodes such that the probability of success is less than $\frac{1}{2}$.   Since we are conditioning on the event that only blue nodes generate the smallest random numbers, applying Equation~\ref{eqn:exp}, the probability that the phase in consideration is a success is given by
\begin{equation} \label{eqn:fail}
\frac{\sum_{b' \in B_{\text{end}}} 2^{p_{b'}}}{\sum_{b \in B_{\text{start}}} 2^{p_b} } 
< \frac{1}{2},
\end{equation}
where $p_b$ is the number of phases in which $b$ has been active in the current leader election process at the start of the phase in consideration.
\begin{align*}
\Psi_{\text{next}}
 &= \frac{ \sum_{b \in B_{\text{end}}} 2^{p_{b}+1}}{2^{p_u+1}} = \frac{ \sum_{b \in B_{\text{end}}} 2^{p_{b}}}{2^{p_u}}\\
&< \left (\frac{\sum_{b \in B_{\text{start}}2^{p_b}}}{2} \right ) \frac{1}{2^{p_u}} = \frac{\Psi_{\text{curr}}}{2}. 
&(\text{by Equation~} \ref{eqn:fail})
\end{align*}
Thus, $\Psi$ reduces by a factor of $\frac{1}{2}$ in each phase in which the adversary employs a strategy to reduce the probability of success to less than $\frac{1}{2}$. Therefore, it follows that  the number of critical  phases that are successful with probability at least $\frac{1}{2}$  is at least $\log n$. 
\end{proof}
Clearly, the probability that none of those critical phases with success probability at least $\frac{1}{2}$ will  be successful is at most $\frac{1}{n}$.  Removing the conditioning that only blue nodes generate the smallest random numbers and no two random numbers generate the same exponentially distributed random numbers, we get the following result.

\begin{theorem} \label{thm:main}
  Algorithm~\ref{alg:frame} solves dynamic leader election and terminates in $O(D \log n)$ rounds with  high probability.
\end{theorem}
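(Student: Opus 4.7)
The plan is to combine the two pillars already assembled in the excerpt: correctness (agreement, validity, stability) follows directly from Lemma~\ref{lem:correct}, so what remains is to pin down the $O(D\log n)$ termination bound and then apply a union bound over the bad events.

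First I would fix an arbitrary node $u$ that lacks a leader at some round $r$ and reduce to the canonical setting used throughout Section~\ref{sec:upper}: shift the clock so that $u$ becomes active and leaderless at the start of phase $1$, assume $u$ survives at least $kD\log n$ rounds for the constant $k\geq 14$ chosen in Lemma~\ref{lem:non-blue} (otherwise termination holds vacuously by our convention on nodes that leave), and assume $u$ has been waiting longest among all currently leaderless active nodes. Any leaderless node whose termination deadline has not been reached can be re-labelled as such a $u$, so this choice is without loss of generality.

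Next I would isolate the $2\log n$ critical phases $[(k-4)D\log n + 1,\, kD\log n]$ and collect the three high-probability events that together imply termination within this window:
\begin{compactenum}
\item $\mathcal{E}_1$: all exponential random numbers drawn in critical phases are pairwise distinct and $O(\log n)$ bits suffice, so in each critical phase Lemma~\ref{lem:unique} applies and at most one node $\ell$ satisfies the election criterion.
\item $\mathcal{E}_2$: in every critical phase the smallest exponential value is generated by a blue node; by Lemma~\ref{lem:non-blue}, $\Pr[\mathcal{E}_2]\geq 1 - 1/n$.
\item $\mathcal{E}_3$: at least one of the $\log n$ critical phases identified by the $\Psi$-potential lemma actually succeeds; each of these has success probability $\geq 1/2$ conditioned on $\mathcal{E}_2$, and the phase outcomes are determined by the blue-minimum indicator, so $\Pr[\lnot \mathcal{E}_3 \mid \mathcal{E}_2]\leq 2^{-\log n}=1/n$.
\end{compactenum}
A union bound over $\lnot\mathcal{E}_1$, $\lnot\mathcal{E}_2$, and $\lnot\mathcal{E}_3$ yields $1-O(1/n)$.

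Finally I would close the loop by observing that a successful critical phase terminates $u$'s election. Indeed, when the phase succeeds, the elected $\ell$ survives until the end of the phase and floods timestamped \beep{} messages during the second $D$ rounds. Since $u \in V^{[r,r+kD\log n]}$ and $u$ lies in the network throughout the second half of the successful phase, the communication diameter assumption forces $u$ to receive $\ell$'s \beep{} within $D$ rounds of its emission, so $u$ sets $\lead_u\leftarrow \ell$ before the end of that phase, i.e.\ within the $kD\log n = O(D\log n)$ round window. Combined with Lemma~\ref{lem:correct}, this proves the theorem.

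The main obstacle is ensuring that the conditional-flooding guarantee actually reaches $u$ inside the critical window: one has to check that the adversary cannot starve $u$ by timing the leader's departure so that its \beep{} messages never propagate the full $D$ hops while $u$ is still present. This is why the argument requires $u$ to persist until round $kD\log n$ and why the definition of a successful phase demands that the elected leader remains for the \emph{entire} second half of the phase; together these guarantee that a \beep{} emitted by $\ell$ is flooded for $D$ consecutive rounds during which $u$ is continuously in the network, which is exactly what the bounded communication diameter assumption needs.
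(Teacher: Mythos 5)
Your proposal is correct and takes essentially the same route as the paper: correctness via Lemma~\ref{lem:correct}, and termination via the same reduction to a longest-waiting active node $u$, the blue-node/critical-phase setup, Lemmas~\ref{lem:unique} and \ref{lem:non-blue}, the $\Psi$-potential lemma giving $\log n$ phases with success probability at least $\tfrac{1}{2}$, and a final union bound over the bad events. Your closing paragraph, which verifies that a successful phase's \beep{} flood actually reaches $u$ within the window, makes explicit a step the paper leaves implicit.
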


%\section{Conclusion}
%We have shown that adding randomization provides a significant performance boost
%with respect to the termination time of Dynamic Leader Election. 

\small
\bibliographystyle{eptcs}
\bibliography{papers,papers1,papers2,leader}
\end{document}